\algnewcommand\tTo{\textbf{to}~}
\algnewcommand\tAnd{\textbf{and}~}
\algnewcommand\Or{\textbf{or}~}
\Crefname{@theorem}{Theorem}{Theorems}
\newcommand{\SSS}{{\cal S}}
\newcommand{\unif}{\mathbb{U}}
\newcommand{\indic}[1]{\mathbb{I}_{#1}}
\newtheorem{definition}{Definition}
\itshape\color{green!75!black},
\begin{document}
\newcommand\relatedversion{}

\title{\Large Sequential and Shared-Memory Parallel Algorithms for Partitioned Local Depths}
\author{Aditya Devarakonda\footnote{Department of Computer Science, Wake Forest University.}
\and Grey Ballard\footnotemark[1]
}

\date{}

\maketitle







\begin{abstract} \small\baselineskip=9pt
In this work, we design, analyze, and optimize sequential and shared-memory parallel algorithms for partitioned local depths (PaLD).
Given a set of data points and pairwise distances, PaLD is a method for identifying strength of pairwise relationships based on relative distances, enabling the identification of strong ties within dense and sparse communities even if their sizes and within-community absolute distances vary greatly.
We design two algorithmic variants that perform community structure analysis through triplet comparisons of pairwise distances.
We present theoretical analyses of computation and communication costs and prove that the sequential algorithms are communication optimal, up to constant factors.
We introduce performance optimization strategies that yield sequential speedups of up to $29\times$ over a baseline sequential implementation and parallel speedups of up to $19.4\times$ over optimized sequential implementations using up to $32$ threads on an Intel multicore CPU.
\end{abstract}

\section{Introduction.}
Partitioned local depths (PaLD) is a method for revealing community structure in distance-based data \cite{pald_pnas22}.
Given pairwise distances (or dissimilarities) of a set of points, PaLD computes another pairwise measure called cohesion that measures closeness based on relative distances.
By relying on relative distance, PaLD is able to use a universal threshold to distinguish between strong and weak ties without defining neighborhoods by a single number of neighborhoods, neighborhood size, or absolute distance threshold.
In this way, PaLD can identify neighborhoods of varying size and density, making it useful for data where the relationships among points behave differently across the space.

The input to PaLD is a distance matrix, and the output is a cohesion matrix.
As detailed in \cref{sec:back}, computing cohesion requires determining the size of the local neighborhood of each pair of points and then computing contributions to cohesion values based on neighborhood sizes.
In each case, the fundamental operation is a comparison of the pairwise distances among triplets of points.
Given $n$ points, this yields an arithmetic complexity of $O(n^3)$.
The goal of this paper is to develop efficient sequential and shared-memory parallel algorithms for scaling PaLD to datasets of size up to $O(10^5)$, making it computationally feasible to analyze ones that fit in memory on a single server.
\Cref{sec:alg-design} presents the structure of the PaLD computation and our two main algorithmic approaches, which we call pairwise and triplet, respectively.
As an $O(n^3)$ computation, PaLD shares many similarities with dense matrix multiplication (GEMM), and our algorithmic design borrows from ideas of cache-efficient algorithms for GEMM \cite{bacd-97-ics,GG_toms08,SLLvdG19}.
For example, the basic computation is a comparison between distances of points $x,y,z$, which involves distance matrix entries $d_{xy}$, $d_{yz}$, and $d_{xz}$ and has an access pattern similar to the fused multiply-adds (FMA) within GEMM.
There are a few key differences between PaLD and GEMM.
First, because of symmetric distances, the order of the points is irrelevant, so rather than requiring consideration of all $n^3$ possible values of $x,y,z$, we need consider only $\binom{n}{3}\approx n^3/6$ unique triplets.
Second, while the memory access of distances is regular, the updates of the cohesion requires branching based on distance comparisons.
Finally, the computation requires two passes because cohesion updates depend on the sizes of local neighborhoods.
Each pass requires a varying mix of integer and floating point operations in addition to the branching.
The pairwise and triplet approaches navigate a tradeoff between exploiting symmetry and achieving regular data access and parallelization.

In \cref{sec:analysis} we prove a lower bound on the cache efficiency of any PaLD algorithm, and we show that both of our algorithms achieve optimal cache performance, up to constant factors.
By exploiting symmetry and applying cache blocking, we obtain data locality in cache and minimize the number of reads and writes of matrix values. 
\Cref{sec:seqexp} details our low-level optimizations of the two PaLD algorithms. We show that branch avoidance has the highest impact on sequential performance given the high cost of branch misprediction \cite{hp-comparch,j88-sigarch,jw89-asplos}.
Along with other optimizations including cache blocking and vectorization, we show performance improvements over naive sequential code of up to $29\times$.
In \cref{sec:par-algs} we design, optimize, and evaluate OpenMP parallel versions of the two PaLD algorithms.
We show that the pairwise algorithm enables regular data access patterns and loop-based parallelism that can largely avoid write conflicts.
The triplet algorithm exploits more symmetry to reduce arithmetic operations but requires task-based parallelism due to more complicated data access patterns and write conflicts.
We also apply Non-Uniform Memory Access (NUMA) optimizations when scaling across sockets. 
We achieve strong scaling speedups up to $19.4\times$ for pairwise and $13.2\times$ for triplet over their optimized sequential versions on $32$ threads.
Finally, we describe a text analysis application in \cref{sec:application}, demonstrating the utility of PaLD on larger datasets than previously considered, and we show a parallel speedup of $16.7\times$ on a task with $n=2712$ using $32$ threads.

\section{Background.}
\label{sec:back}
Given a set of points and a pairwise distance metric, partitioned local depth (PaLD) algorithms determine the pairwise cohesion between all pairs of points in a dataset \cite{pald_pnas22}.
Assuming that the dataset comprises sufficiently separated subsets, cohesion values are invariant to contraction and dilation of distances within subset distances.
Community structure revealed by cohesion values capture the concept of near neighbors based on relative positioning, adapting to varying density.
This approach is more flexible than standard cluster labeling or nearest neighbor approaches.
Density-based approaches (e.g. DBSCAN) \cite{cm+13-pakdd,cm+15-tkdd,ek+96-kdd} that attempt to combine points into high- and low-density groups based on pairwise distances include thresholding (tuning) parameters to reflect locality and cluster size.
Likewise, $k$-nearest neighbor (KNN) approaches \cite{gh04-neurips} attempt to group points via comparisons against their $k$ nearest neighbors (using absolute distances).
The tuning parameter, $k$, controls the neighborhood size for a given point and is often fixed for all points.
Cohesion values depend on triplet distance comparisons (as opposed to absolute distances) which require only measures of relative similarity and can be more reliable than exact numerical distances for analyzing high dimensional, non-Euclidean data.
PaLD requires $O(n^3)$ operations to compute cohesion values without assumptions on underlying probability distribution or tuning parameters.

Given a set of points $\SSS$, the \emph{local focus} of a pair of points $x,y\in \SSS$ is the set of all points within distance $d_{xy}$ of either $x$ or $y$, where $d_{xy}$ is the distance between $x$ and $y$: $ {\cal U}_{xy} = \{z \in \SSS \; | \; d_{xz} \leq d_{xy} \; \Or \;  d_{yz} \leq d_{xy}\}.$
We let $u_{xy} = |{\cal U}_{xy}|$ denote the size of the local focus.

The \emph{local depth} of a point $x \in \SSS$ is the probability that, given a uniformly chosen random second point $Y\in \cal S$ and a random third point $Z$ chosen uniformly from the local focus ${\cal U}_{xY}$, $Z$ is closer to $x$ than $Y$:
\begin{equation}
\label{eq:LD}
    \ell_x = \Pr\left[d_{Zx} < d_{ZY} \; | \; Y \sim \unif(\SSS {\setminus} \{x\}), Z \sim \unif({\cal U}_{xY})\right].
\end{equation}

The cohesion of a point $z$ to another point $x$ is a part of the local depth $\ell_x$ and is defined as
\begin{equation}
\label{eq:cohesion}
    c_{xz} = \Pr\left[Z = z \; \tAnd \; d_{Zx} < d_{ZY} \right].
\end{equation}
The random variables $Y$ and $Z$ in \cref{eq:cohesion} are chosen from the same distributions as in \cref{eq:LD}; we drop the notation here and later.
This implies that $\ell_x = \sum_{z\in \SSS} c_{xz}$, or that cohesion is partitioned local depth.
The cohesion matrix, $C$, can be used to analyze community structure.
For example, two points have particularly strong cohesion if the impact of one of the points to the other is more than that expected from a random focus point of another random point.
\section{PaLD Algorithms Design.}
\label{sec:alg-design}

In order to compute the cohesion of all pairs of points, we can again use the law of total probability to partition $c_{xz}$ across all points $y\in {\cal S}$:
\begin{equation*}
    c_{xz} = \sum_{y\in \SSS} \Pr\left[Y = y \; \tAnd \; Z = z \; \tAnd \; d_{Zx} < d_{ZY} \right].
\end{equation*}
Using the law of conditional probability, this becomes
\begin{multline*}
     c_{xz} = \sum_{y\in \SSS} \Pr\left[d_{zx} < d_{zy} \; | \; Y = y, Z = z \right] \\
     \cdot \Pr\left[Z=z \; | \; Y = y\right] \cdot \Pr\left[Y=y\right]
\end{multline*}
which implies
\begin{equation}
\label{eq:fxz}
c_{xz} = \sum_{y\in \SSS} \indic{d_{xz} \leq d_{yz}} \cdot \frac{\indic{d_{xz}\leq d_{xy}}}{u_{xy}} \cdot \frac{1}{n-1}
 = \frac{1}{n-1} \sum_{y\in \SSS} g_{xyz},
\end{equation}
where $\indic{}$ is the indicator function and we have defined
\begin{equation}
\label{eq:gxyz}
    g_{xyz} = \indic{d_{xz} \leq d_{yz}} \cdot \indic{d_{xz}\leq d_{xy}} \, / \, u_{xy}.
\end{equation}

The task is then to compute $g_{xyz}$ for all $x,y,z\in\SSS$, a total of $n^3$ values.
However, only approximately 1/3rd of the $g_{xyz}$ values are nonzero because, given three points with unique pairwise distance values, only one pair has the minimum distance.
For example, given points $x,y,z\in\SSS$, if $x$ and $y$ are the closest pair, then $g_{xzy}$ and $g_{yzx}$ are nonzero, but $g_{xyz}=g_{yxz}=g_{zxy}=g_{zyz}=0$.
To compute the nonzero values $g_{xzy}$ and $g_{yzx}$, we need the values $u_{xz}$ and $u_{yz}$.
The size of any given local focus can be computed as $u_{xy} = \sum_{z\in\SSS} \indic{d_{xz} \leq d_{xy} \,\Or d_{yz} \leq d_{xy}}.$

We consider two algorithmic approaches to computing the local focus sizes and the final cohesion matrix that take advantage of the symmetry.
The first approach, which we call the \emph{pairwise algorithm}, considers all $\binom{n}{2}$ pairs of points, and for each pair, first determines the size of its local focus and then computes contributions to the cohesion matrix from all points within the local focus.
The second approach, which we call the \emph{triplet algorithm}, considers all $\binom{n}{3}$ triplets of points, and for each triplet, determines which of the two local foci the triplet contributes to and then (in a second pass) determines which of the two cohesion matrix entries the triplet contributes to.
We analyze and compare the two algorithms in \cref{sec:analysis}.

\subsection{Pairwise Algorithm.}

The entry-wise pairwise algorithm is given as \cref{alg:pairwise}.
The idea is to perform the computations for each pair of points $x$ and $y$.
To compute $g_{xyz}$ for each third point $z$, we first must compute the size of the local focus, $u_{xy}$.
This requires a pass over all $n$ points with two comparisons and a possible integer increment.
A second pass over all $n$ points determines, for points in the local focus, which of the points $x$ or $y$ the third point supports, and the cohesion matrix is updated accordingly.
Note that only one local focus size need be stored at any one time, requiring minimal temporary memory.

\begin{algorithm}[t]
\footnotesize
    \caption{Pairwise Sequential Algorithm}
    \label{alg:pairwise}
    \begin{algorithmic}[1]
    \Require $D \in \mathbb{R}^{n \times n},$~Distance Matrix
    \Ensure $C \in \mathbb{R}^{n \times n},$~Cohesion Matrix
    \For{$x = 1$ \tTo  $n - 1$}
        \For{$y = x + 1$ \tTo $n$}
            \State{$u_{xy} = 0$}
            \For{$z = 1$ \tTo $n$}
                \If{$d_{xz} < d_{xy}$ \Or $d_{yz} < d_{xy}$}
                    \State{$u_{xy} = u_{xy}+1$}
                \EndIf
            \EndFor
        \For{$z = 1$ \tTo $n$}
            \If{$d_{xz} < d_{xy}$ \Or $d_{yz} < d_{xy}$}
                \If{$d_{xz} < d_{yz}$}
                    \State{$c_{xz} = c_{xz} + 1/u_{xy}$}
                \Else
                    \State{$c_{yz} = c_{yz} + 1/u_{xy}$}
                \EndIf
            \EndIf

        \EndFor
        \EndFor
    \EndFor
\end{algorithmic}
\end{algorithm}

To improve the cache locality, we block the algorithm as follows: instead of considering only one pair of points, we consider two sets of points $\mathcal{X}$ and $\mathcal{Y}$ and consider all the pairs $(x,y)\in \mathcal{X} \times \mathcal{Y}$.
In this way, we obtain locality on the distance matrix block $D_{\mathcal{X},\mathcal{Y}}$ and a temporary block of local focus sizes $U_{\mathcal{X},\mathcal{Y}}$.

As in the entry-wise algorithm, the blocked algorithm makes two passes over all $n$ third points.
The first pass computes $U_{\mathcal{X},\mathcal{Y}}$, a local focus size block, and the second pass makes updates to the cohesion matrix.

\begin{figure}[t]
    \centering
    \includegraphics[width=.9\columnwidth]{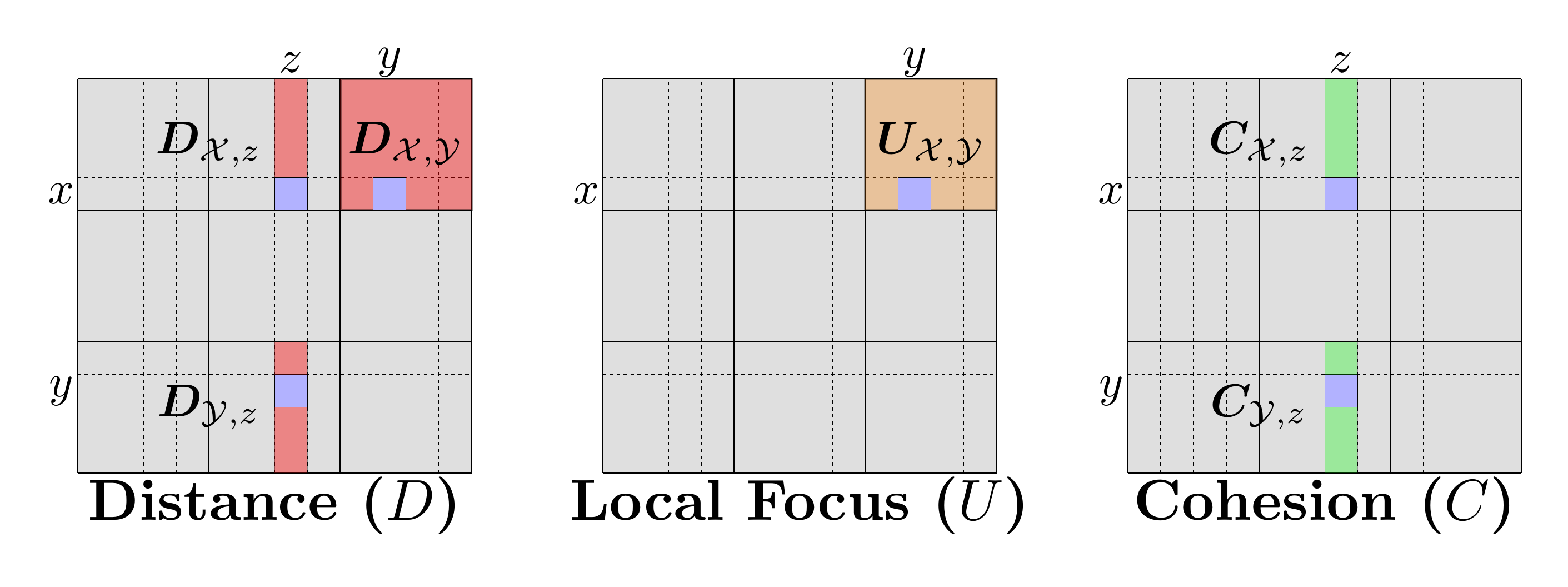}
    \caption{Dependency structure of the blocked pairwise algorithm. The highlighted regions represent quantities with temporal locality. Quantities in red correspond to reads and ones in green correspond to writes. Orange entries are computed and used in fast memory and then discarded.  Blue represents entry-wise dependencies within each matrix/vector.}
    \label{fig:pairwise_dependency}
\end{figure}

\Cref{fig:pairwise_dependency} shows the dependencies among the distance, local focus, and cohesion matrices for the blocked ($b = 4$) pairwise algorithm.
The red blocks correspond to the entries of the distance matrix that are read and re-used while processing the pair of blocks $\mathcal{X}$ and $\mathcal{Y}$ (the pattern is the same in both passes, though $D_{\mathcal{X},\mathcal{Y}}$ remains in fast memory through both passes.
The orange blocks represent entries of the local focus matrix which are computed in fast memory during the first pass and used during the second pass.
The green blocks of the cohesion matrix are re-used during the second pass before being written back to slow memory.
The blue blocks represent dependencies between entries of $D, U,$ and $C$ for one entry-wise iteration.

\subsection{Triplet Algorithm.}

The entry-wise triplet algorithm is given as \cref{alg:triplet}.
In \Cref{alg:pairwise}, if a third point $z$ is in the local focus of $x$ and $y$ and is closer to $x$, then only the support of $z$ for $x$ is recorded in $C$ ($c_{xz}$ is updated).
If $z$ is closer to $x$ in a focus with $y$, then $x$ is closer to $z$ in its focus with $y$.
The idea of the triplet algorithm is to minimize the number of distance comparisons.
By performing all the updates for each triplet of points, we can avoid redundant comparisons.
However, this method requires that the local focus sizes are pre-computed for all pairs of points within the triplet, so it requires more temporary memory.
\begin{algorithm}[t]
\footnotesize
    \caption{Triplet Sequential Algorithm.}
    \label{alg:triplet}
    \begin{algorithmic}[1]
    \Require $D \in \mathbb{R}^{n \times n}$ Distance Matrix
    \Ensure $C \in \mathbb{R}^{n \times n}$ Cohesion Matrix
        \State{Initialize $U = \texttt{triu}(2*\texttt{ones}(n),1)$ }
        \For{$x = 1$ \tTo $n-1$}
            \For{$y = x + 1$ \tTo $n$}
                \For{$z = y + 1$ \tTo $n$}
                    \If{$d_{xy} < d_{xz}$ \tAnd $d_{xy} < d_{yz}$}
                        \State{// $x,y$ is closest pair}
                        \State $u_{xz} = u_{xz} + 1$
                        \State $u_{yz} = u_{yz} + 1$
                    \ElsIf{$ d_{xz} < d_{yz}$}
                        \State{// $x,z$ is closest pair}
                        \State $u_{xy} = u_{xy} + 1$
                        \State $u_{yz} = u_{yz} + 1$
                    \Else
                        \State{// $y,z$ is closest pair}
                        \State $u_{xy} = u_{xy} + 1$
                        \State $u_{xz} = u_{xz} + 1$
                    \EndIf
                \EndFor
            \EndFor
        \EndFor
        \For{$x = 1$ \tTo $n-1$}
            \For{$y = x + 1$ \tTo $n$}
                \For{$z = y + 1$ \tTo $n$}
                    \If{$d_{xy} < d_{xz} $ \tAnd $ d_{xy} < d_{yz}$}
                        \State $c_{xy} = c_{xy} + 1/u_{xz}$
                        \State $c_{yx} = c_{yx} + 1/u_{yz}$
                    \ElsIf{$ d_{xz} < d_{yz}$}
                        \State $c_{xz} = c_{xz} + 1/u_{xy}$
                        \State $c_{zx} = c_{zx} + 1/u_{yz}$
                    \Else
                        \State $c_{yz} = c_{yz} + 1/u_{xy}$
                        \State $c_{zy} = c_{zy} + 1/u_{xz}$
                    \EndIf
                \EndFor
            \EndFor
        \EndFor
\end{algorithmic}
\end{algorithm}

We can also block the triplet algorithm to obtain better cache locality.
Instead of a single triplet of points, we consider three blocks $\mathcal{X},\mathcal{Y},\mathcal{Z}$ and all triplets $(x,y,z)\in \mathcal{X} \times \mathcal{Y} \times \mathcal{Z}$.
We obtain locality on cache blocks of all three matrices: distance, local focus, and cohesion.
Note that a first pass is required to compute the local focus matrix in its entirety, and then blocks of the local focus matrix are read from slow memory during the second pass as needed.

\begin{figure}[t]
    \centering
    \includegraphics[width=.9\columnwidth]{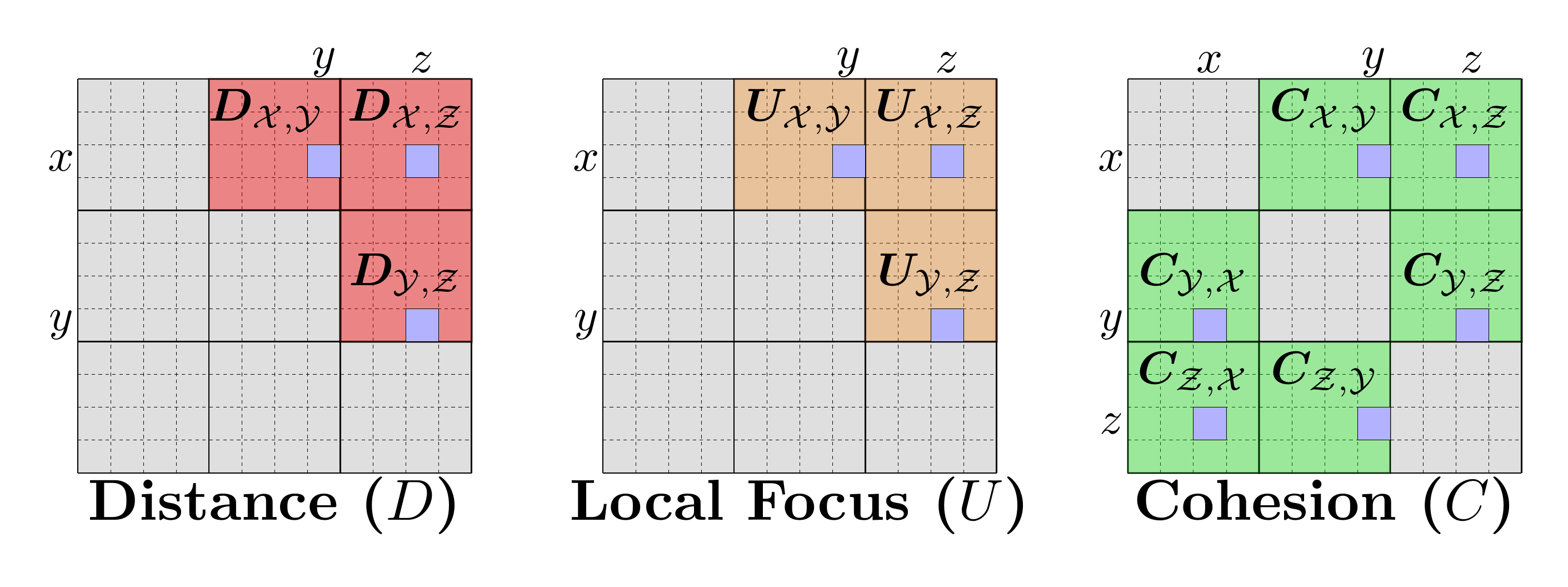}
    \caption{Dependency structure of the blocked triplet algorithm. The highlighted regions represent entries with temporal locality. Matrices in red correspond to reads and ones in green correspond to writes. Matrices in orange correspond to writes during the first pass and reads during the second pass. Blue represents the entry-wise dependencies within each matrix.}
    \label{fig:triplet_dependency}
\end{figure}

\Cref{fig:triplet_dependency} illustrates the dependencies among the distance, local focus, and cohesion matrices for the (blocked) triplet algorithm.
In the first pass, the blocked triplet algorithm reads $3$ blocks from the distance matrix, corresponding to the triplet pairs: $(x,y), (x,z), (y,z)$, and writes to the corresponding $3$ blocks of the local focus matrix.
Note that the distance and local focus matrices are symmetric so only the upper triangular parts are required.
The cohesion matrix is not symmetric, thus in the second pass $6$ blocks must be updated by performing distance comparisons (by reading $D_{\mathcal{X},\mathcal{Y}}, D_{\mathcal{X},\mathcal{Z}}, D_{\mathcal{Y},\mathcal{Z}}$) and utilizing entries of the local focus matrix (by reading $U_{\mathcal{X},\mathcal{Y}}, U_{\mathcal{X},\mathcal{Z}}, U_{\mathcal{Y},\mathcal{Z}}$).

\section{Sequential Algorithm Analysis.}
\label{sec:analysis}
We model performance using the model, $\gamma F + \beta W$, where $F$ and $W$ represent an algorithm's computation and bandwidth costs, respectively, and $\gamma$ (time per operation) and $\beta$ (time per word moved) represent hardware parameters.
We analyze communication cost assuming a two-level memory hierarchy, which contains fast memory (cache) of size $M$ words and slow memory (DRAM) with unbounded size.
We assume that computation can only be performed on operands residing in fast memory.
If operands are in slow memory, then they must first be read into fast memory.
We limit analysis in this section to a two-level memory hierarchy, but this memory model can be used to analyze communication for each adjacent pair of levels in a multi-level memory hierarchy.
\subsection{Communication Lower Bounds.}

We use the framework in \cite{BCD+_actanumerica14} to derive communication lower bounds.
The lower bound of \cite[Theorem 2.6]{BCD+_actanumerica14} applies to all three-nested-loops (3NL) computations as defined in that paper.
We reproduce the 3NL definition here using the same notation, with sets $S_a, S_b, S_c \subseteq [n] \times [n]$ where $[n] = \{1,2,\dots,n\}$ and mappings $\mathbf{a}:S_a \rightarrow \cal{M}$, $\mathbf{b}:S_b \rightarrow \cal{M}$, $\mathbf{c}:S_c \rightarrow \cal{M}$, where $\cal{M}$ is slow memory.
For each $(i,j)\in S_c$, we also have a set $S_{ij} \subseteq [n]$.

\begin{definition}[{\cite[Definition 2.4]{BCD+_actanumerica14}}]
A computation is considered to be three-nested-loops (3NL) if it includes computing, for all $(i,j)\in S_c$ with $S_{ij}$,
\begin{displaymath}
    \text{Mem}(\mathbf{c}(i,j)) = f_{ij}(\{g_{ijk}(\text{Mem}(\mathbf{a}(i,k)),\text{Mem}(\mathbf{b}(k,j))\}_{k\in S_{ij}}),
\end{displaymath}
where
(a) mappings $\mathbf{a}$, $\mathbf{b}$, $\mathbf{c}$ are all one-to-one into slow memory, and (b) functions $f_{ij}$ and $g_{ijk}$ depend nontrivially on their arguments.
\end{definition}

We first verify that the cohesion matrix computation defined by \cref{eq:fxz,eq:gxyz} is 3NL when the distance matrix is stored explicitly in memory.
To satisfy the first constraint, we define the mappings $\mathbf{a}$, $\mathbf{b}$, and $\mathbf{c}$ as all mapping onto the distance matrix (that is, each mapping is one-to-one but the three mappings are not disjoint).
Here $\mathbf{a}(x,y)$ maps to the distance matrix entry $d_{xy}$.
To satisfy the second constraint, we see that computing $g_{xyz}$ depends nontrivially on $\mathbf{a}(x,y)$ and $\mathbf{b}(y,z)$, as both values must be compared with $d_{xz}$ to evaluate the indicator functions, and computing $c_{xz}$ depends nontrivially on its arguments, as it computes the sum over all values.
As argued in \cref{sec:alg-design}, the number of 3NL operations is $\sum_{i,j} |S_{ij}| = O(n^3)$.
Then, by \cite[Theorem 2.6]{BCD+_actanumerica14}, the bandwidth cost lower bound for PaLD is $W=\Omega(n^3/\sqrt M)$.

\subsection{Cost Analysis.}\label{sec:seqcost}

The blocked algorithms are described in \cref{sec:alg-design} with memory reference patterns depicted in \cref{fig:pairwise_dependency,fig:triplet_dependency}.
The loop structures of the blocked algorithms are shown (with OpenMP parallelization) in \cref{fig:omp_pairwise,fig:omp_triplet}.
We focus on the sequential costs in this section and discuss parallelization in \cref{sec:par-algs}.
Since the algorithms require mixed comparison and arithmetic instructions, we explicitly define the hardware parameters $\gamma_{cmp}$ and $\gamma_{fma}$ to represent the time per instruction for floating-point comparisons and FMAs, respectively.
We ignore the cost of integer arithmetic.
\Cref{fig:omp_pairwise} shows the loop structure of the blocked pairwise algorithm where inner loop computations match \cref{alg:pairwise}.
We use $b$ to represent the block size for the pairwise algorithm.

\begin{theorem}\label{thm:pairwise}
The blocked pairwise algorithm has the leading order computation and communication costs:
    \begin{align*}
        F &= (5\gamma_{cmp} + 1\gamma_{fma})\cdot n\binom{n}{2} \approx 3n^3~\text{flops.}\\
        W &= 4\sqrt{2}~\frac{n^3}{\sqrt{M}} \approx 5.7~\frac{n^3}{\sqrt{M}}~\text{words moved.}
    \end{align*}
\end{theorem}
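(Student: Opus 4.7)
The plan is to treat the computation cost $F$ and the bandwidth cost $W$ separately, since the blocking only reorders arithmetic operations and so affects only $W$.

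For $F$, I would unroll \cref{alg:pairwise} and tally the floating-point comparisons and FMAs executed in the innermost body. The outer loops range over $\binom{n}{2}$ pairs $(x,y)$, and for each pair both passes iterate over all $n$ values of $z$. Pass 1 performs two comparisons per $z$ ($d_{xz}<d_{xy}$ and $d_{yz}<d_{xy}$) to detect membership in the local focus. Pass 2 repeats those two comparisons, and for $z$ in the focus performs one further comparison ($d_{xz}<d_{yz}$) plus one FMA to add $1/u_{xy}$ to the appropriate cohesion entry. Assuming branchless execution — consistent with the optimizations motivated in \cref{sec:seqexp} — the inner body is identical for every $z$: $5$ comparisons and $1$ FMA. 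Multiplying gives $F=(5\gamma_{cmp}+\gamma_{fma})\cdot n\binom{n}{2}$.

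For $W$, I would count words moved one block at a time, following \cref{fig:pairwise_dependency}. While processing a block $\mathcal{X}\times\mathcal{Y}$, two $b\times b$ sub-blocks must reside in cache: $D_{\mathcal{X},\mathcal{Y}}$, which is reused through both passes, and the scratch block $U_{\mathcal{X},\mathcal{Y}}$, which is written in pass 1 and read in pass 2. Their combined footprint is $2b^2$ to leading order, so the cache constraint forces $b=\sqrt{M/2}$. The per-block traffic then consists of (i) one fetch of $D_{\mathcal{X},\mathcal{Y}}$ of size $b^2$, (ii) streaming of the column strips $D_{\mathcal{X},z}$ and $D_{\mathcal{Y},z}$ once per pass for a total of $4bn$ words, and (iii) one read/write cycle for the cohesion strips $C_{\mathcal{X},z}$ and $C_{\mathcal{Y},z}$ in pass 2 for another $4bn$ words. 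Summing the leading terms yields $8bn$ per block, and there are $\binom{n/b}{2}\approx n^2/(2b^2)$ blocks, giving $W\approx 4n^3/b$. Substituting $b=\sqrt{M/2}$ produces $W\approx 4\sqrt{2}\,n^3/\sqrt{M}$.

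The routine but error-prone part is pinning down the constants, which is where I would be most careful. I would verify that the cache constraint really is $2b^2\le M$ (not $b^2$ or $3b^2$) by showing that the column strips of $D$ and $C$ contribute only $O(b)$ working space and so do not perturb the leading-order block size, and that the per-block traffic really carries leading coefficient $8bn$ — this uses the fact that the $D$-column strips must be re-read in pass 2, since they cannot be retained simultaneously with $U_{\mathcal{X},\mathcal{Y}}$, and that both $C$-strips incur both a read and a write. Once these are settled, comparing the resulting $W=\Theta(n^3/\sqrt{M})$ against the lower bound derived earlier in \cref{sec:analysis} certifies optimality up to a constant factor.
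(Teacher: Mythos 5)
Your proposal is correct and follows essentially the same argument as the paper: the same per-iteration tally of $5$ comparisons and $1$ FMA over $n\binom{n}{2}$ iterations, the same per-block traffic of $8bn$ words (two passes of $D$-strips plus a read/write cycle of $C$-strips) over $\approx n^2/(2b^2)$ block pairs giving $4n^3/b$, and the same cache constraint $2b^2\le M$ from keeping $D_{\mathcal{X},\mathcal{Y}}$ and $U_{\mathcal{X},\mathcal{Y}}$ resident. The only cosmetic difference is that the paper also notes the diagonal-block case $\mathcal{X}=\mathcal{Y}$ contributes only $n\binom{b}{2}$ iterations, which does not affect the leading-order constants.
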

\begin{proof}

    The blocked pairwise algorithm selects $\binom{n/b+1}{2}$ unique sets of points $\mathcal{X}, \mathcal{Y}$ with $|\mathcal{X}| = |\mathcal{Y}| = b$.
    A total of $nb^2$ iterations are required to determine if a third point, $z$, is in the local focus for each $(x,y) \in \mathcal{X} \times \mathcal{Y}$.
    The local focus update requires $2$ floating-point comparisons followed by $1$ integer accumulate into $u_{xy}$.
    The cohesion update requires $3$ floating-point comparisons and $1$ FMA, as the reciprocals of elements of $U_{\mathcal{X},\mathcal{Y}}$ can be pre-computed once.
    When $\mathcal{X} = \mathcal{Y}$, only $n\binom{b}{2}$ iterations are required to perform local focus and cohesion updates.
    There are $n/b$ such overlapping sets.
    Multiplying over the iterations, summing the work over the local focus and cohesion update loops, and multiplying by $\gamma_{cmp}$ and $\gamma_{fma}$ yields the computation cost.

    Each of the $\binom{n/b + 1}{2}$ possible combinations of $\mathcal{X} \times \mathcal{Y}$ points requires reading the $b\times b$ block $D_{\mathcal{X},\mathcal{Y}}$ from slow memory.
    In the first pass to compute the local focus sizes, for each third point, $z$, we read the two $b \times 1$ vectors $D_{\mathcal{X},z}$ and $D_{\mathcal{Y},z}$ from slow memory.
    The local focus block $U_{\mathcal{X}, \mathcal{Y}}$ is computed and remains resident in fast memory.
    Similarly, each iteration of the second pass cohesion update requires reading the $b \times 1$ vectors $D_{\mathcal{X},z}, D_{\mathcal{Y},z}, C_{\mathcal{X},z}$ and $C_{\mathcal{Y},z}$ from slow memory.
    After each iteration within the second pass, $C_{\mathcal{X},z}$ and $C_{\mathcal{Y},z}$ must be written to slow memory.
    We must maintain $2 b^2$ words of data in fast memory for $D_{\mathcal{X},\mathcal{Y}}$ and $U_{\mathcal{X},\mathcal{Y}}$, along with a constant number of length-$b$ vectors, so $b \leq \sqrt{{M}/2}$ to leading order.
    Multiplying and summing these reads and writes over all iterations yields the leading order communication cost $4 n^3/b$, and choosing $b \approx \sqrt{{M}/2}$ yields the result.
\end{proof}

\Cref{fig:omp_triplet} shows the loop structure of the blocked triplet algorithm, and the inner loop computations match \cref{alg:triplet}.
The local focus sizes and cohesion matrix updates are computed in two separate passes, and two block sizes $\hat b$ and $\tilde b$ can be tuned independently.

\begin{theorem}\label{thm:triplet}
The blocked triplet algorithm has the leading order computation and communication costs:
    \begin{align*}
        F &= (6\gamma_{cmp} + 2\gamma_{fma})\cdot \binom{n}{3} \approx 1.33n^3~\text{flops.}\\
        W &= \left(\sqrt{6} + 4\sqrt{3}\right)\frac{n^3}{\sqrt{M}}\approx 9.4 \frac{n^3}{\sqrt{M}}~\text{words moved.}
    \end{align*}
\end{theorem}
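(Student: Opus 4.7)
The plan is to parallel the proof of \cref{thm:pairwise} by analyzing Pass 1 (local-focus computation) and Pass 2 (cohesion update) separately, since they admit distinct block sizes $\hat b$ and $\tilde b$. For the arithmetic cost I would count the per-triple instructions. The three nested loops of \cref{alg:triplet} enumerate exactly the $\binom{n}{3}$ unordered triples with $x<y<z$ in each pass. For every triple the branching identifies the closest pair using at most three floating-point comparisons (two in the outer \textbf{if} and a third in the \textbf{elsif}); in Pass 1 this is followed by two integer increments whose cost we ignore, and in Pass 2 by two FMAs into $C$, since the required reciprocals of $U$ can be precomputed once. Summing the two passes yields $6\gamma_{cmp}+2\gamma_{fma}$ per triple, and multiplying by $\binom{n}{3}\approx n^3/6$ gives the stated $F$.

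For the communication cost, I would treat each pass independently. The algorithm traverses approximately $\binom{n/b}{3}\approx n^3/(6b^3)$ distinct block triples $(\mathcal{X},\mathcal{Y},\mathcal{Z})$ to leading order; coincident-block triples contribute only lower-order terms. For each pass I would identify which blocks must be simultaneously resident (giving the fast-memory capacity constraint), count the effective words transferred per block triple, and then optimize the block size. For Pass 1, the three-way comparison needs all three D blocks $D_{\mathcal{X}\mathcal{Y}},D_{\mathcal{X}\mathcal{Z}},D_{\mathcal{Y}\mathcal{Z}}$ accessible, and the updates may touch all three U blocks $U_{\mathcal{X}\mathcal{Y}},U_{\mathcal{X}\mathcal{Z}},U_{\mathcal{Y}\mathcal{Z}}$, so the capacity constraint is $6\hat b^2\le M$. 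Using an outer loop over a pair $(\mathcal{X},\mathcal{Y})$ that keeps $D_{\mathcal{X}\mathcal{Y}}$ and $U_{\mathcal{X}\mathcal{Y}}$ resident across inner iterations over $\mathcal{Z}$, the per-block-triple leading-order transfer count simplifies to $6\hat b^2$, giving total $n^3/\hat b$. Setting $\hat b=\sqrt{M/6}$ then yields $\sqrt 6\,n^3/\sqrt M$.

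Pass 2 is analogous but must also access the three U blocks and all six cohesion blocks indexed by ordered pairs among $\{\mathcal{X},\mathcal{Y},\mathcal{Z}\}$, since $C$ is not symmetric and each triple updates two distinct C entries, so the capacity constraint becomes $12\tilde b^2\le M$. With the same loop-ordering strategy the effective per-block-triple transfer count becomes $12\tilde b^2$, the total is $2n^3/\tilde b$, and setting $\tilde b=\sqrt{M/12}$ gives $4\sqrt 3\,n^3/\sqrt M$. Adding the two pass contributions yields the claimed $W$.

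The main obstacle is the amortized bookkeeping: a naive per-block-triple tally that separately counts reads and writes for every block that is both read and written (the $U$ blocks in Pass 1, and both $U$ and $C$ blocks in Pass 2) overshoots the stated constants. The tight leading-order constants depend on pinning down a specific loop ordering in which the outermost blocks (for example, $D_{\mathcal{X}\mathcal{Y}}$ and $U_{\mathcal{X}\mathcal{Y}}$ in Pass 1, and also $C_{\mathcal{X}\mathcal{Y}}$ and $C_{\mathcal{Y}\mathcal{X}}$ in Pass 2) remain resident across their full inner-loop reuse window, so that their round-trip transfer cost becomes $O(n^2)$ in total and falls to lower order, while still fitting within the capacity bound. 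Verifying that the amortization and the capacity constraint are simultaneously attained is the technical core of the argument.
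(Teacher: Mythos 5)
Your proposal is correct and follows essentially the same route as the paper's proof: the same per-triple instruction count ($3$ comparisons per pass plus $2$ FMAs with precomputed reciprocals), the same per-block-triple transfer tallies ($6\hat b^2$ and $12\tilde b^2$), the same capacity constraints $\hat b\le\sqrt{M/6}$ and $\tilde b\le\sqrt{M/12}$, and the same amortization of the outer $(\mathcal{X},\mathcal{Y})$ blocks over the inner $\mathcal{Z}$ loop, which the paper likewise notes contributes only $\binom{n/b+1}{2}$ transfers and hence lower-order terms. The "technical core" you flag at the end is exactly the observation the paper makes, so there is no gap.
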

\begin{proof}
    The blocked local focus and cohesion matrix passes have the same loop structure, each selecting $\binom{n/b + 2}{3}$ triplets of sets $\mathcal{X}, \mathcal{Y},$ and $\mathcal{Z}$ each of size $b$ points, though the value of $b$ differs in the two passes.
    The triplet algorithm contains $3$ types of symmetry: $\mathcal{X} = \mathcal{Y} = \mathcal{Z}$, $\mathcal{X} \neq \mathcal{Y} = \mathcal{Z}$, and $\mathcal{X} = \mathcal{Y} \neq \mathcal{Z}$.
    While our implementation accounts for each type of symmetry, we ignore it in our leading order cost analysis.
    The local focus and cohesion update inner iterations each require $3$ distance comparisons to determine the pair of points with minimum distance.
    The cohesion update iteration additionally requires $2$ FMAs to update entries of the cohesion matrix.
    Multiplying operations by their respective $\gamma$ terms and summing work over the two passes proves the computation cost. 

    There are $\binom{n/\hat{b} + 2}{3}$ possible combinations of triplet blocks in the local focus pass. 
    The local focus update must read 2 $\hat{b} \times \hat{b}$ blocks of $D$, read 2 $\hat{b}\times \hat{b}$ blocks of $U$, and write 2 $\hat{b}\times \hat{b}$ blocks of $U$ from/to slow memory.
    Note that the block $D_{\mathcal{X},\mathcal{Y}}$ can be read and the block $U_{\mathcal{X},\mathcal{Y}}$ read and written only $\binom{n/\hat{b} + 1}{2}$ times since they remain fixed while blocks $\mathcal{Z}$ vary in the innermost loop.
    The cohesion update requires reading 2 $\tilde{b} \times \tilde{b}$ blocks of $D$ and $U$, respectively, followed by reading and writing 4 $\tilde{b} \times \tilde{b}$ blocks of $C$.
    The blocks $D_{\mathcal{X},\mathcal{Y}}$ and $U_{\mathcal{X},\mathcal{Y}}$ are read from slow memory and the blocks $C_{\mathcal{X},\mathcal{Y}}$ and $C_{\mathcal{Y},\mathcal{X}}$ can be read and written $\binom{n/\tilde{b} + 1}{2}$ times.
    The total I/O cost is then $n^3/\hat b + 2 n^3/\tilde b$, assuming that all blocks can be stored in fast memory.
    This requires that $\hat b \leq \sqrt{{M}/6}$ and $\tilde b \leq \sqrt{{M}/12}$ to leading order.
    Choosing block sizes at their approximate maximum value yields the communication cost.
\end{proof}
The constants for the communication cost in \Cref{thm:triplet} can be improved by unblocking the innermost loop over $\mathcal{Z}$ for the local focus and cohesion update passes, which allows for a slightly larger block size.
We use this technique for the pairwise algorithm, and it is useful in practice for matrix multiplication as well \cite{SLLvdG19}.
However, incorporating this optimization did not allow for auto-vectorization during cohesion updates where some updates require a stride of $n$.
Blocking all three loops allowed for unit-stride for all cohesion updates. We provide more details in the following section.

We can conclude from \Cref{thm:pairwise,thm:triplet} that the pairwise variant requires more computation than the triplet variant, but it moves less data.
Both sequential variants attain the 3NL lower bound of $\Omega({n^3}/{\sqrt{M}})$ and are communication-optimal within a constant factor.
We will show in the next section how additional performance optimizations can yield large speedups.
The optimized sequential algorithms serve as the baselines from which we derive efficient shared-memory parallel algorithms.
\section{Sequential Performance Optimization.}\label{sec:seqexp}
We study the performance improvements achieved by each optimization, the tuning parameters introduced, and performance tradeoffs between the pairwise and triplet variants.
All algorithms were written in C and compiled with the Intel C compiler (\verb|icc|) release 2021.06.
The code was compiled with the following compiler flags: \verb|-Ofast -mavx512 -opt-zmm-usage=high|.
Experiments are performed on a single-node, dual-socket platform with two Intel Xeon Gold 6226R CPUs (16 cores per socket).
We run 5 trials for each experiment and use the mean to compute speedups.
We observe low runtime variance across trials, so we omit error bars for simplicity.
We perform experiments on randomly generated distance matrices for powers of two $n \in \{128, \ldots,4096\}$.
Our code can handle arbitrary square matrix sizes, but we limit performance evaluation to powers of two.
\begin{figure}[t]
    \centering
    \includegraphics[width=.8\columnwidth]{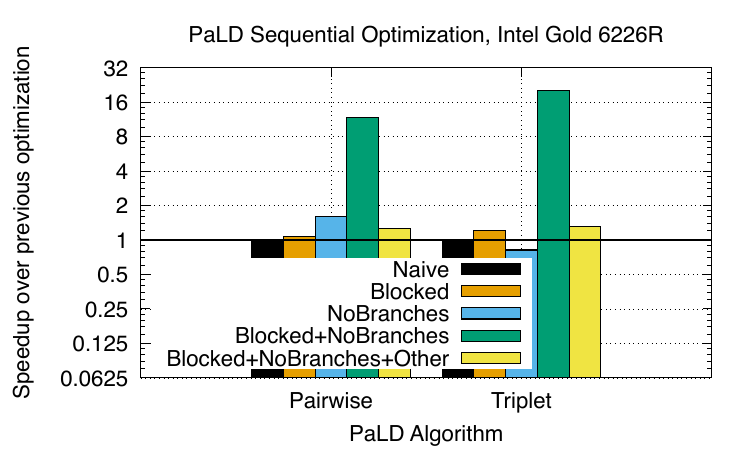}
    \caption{
        Speedup achieved from various performance optimizations applied to the Pairwise and Triplet algorithms.
        Speedups are arranged by algorithm and relative to the previous performance optimization attempted.
        The naive implementations of pairwise and triplet have a speedup of $1$.
    }
    \label{fig:seqopt}
\end{figure}
We begin performance tuning by applying one level of blocking to \cref{alg:pairwise} (naive pairwise) and \cref{alg:triplet} (naive triplet).
We show speedups relative to the previous optimization tried in \cref{fig:seqopt} with a fixed $n = 2048$ matrix.
Overall speedup over naive pairwise (resp. naive triplet) may be obtained by multiplying speedups across all optimizations.
Naive triplet resulted in a speedup of $1.11\times$ over naive pairwise due to less computation.
Introducing one level of blocking to naive pairwise led to a speedup of $1.07\times$.
Applying blocking to the triplet variant led to speedups of $1.20\times$ over naive triplet ($1.33\times$ over naive pairwise).
\Cref{alg:pairwise,alg:triplet} require branches to correctly update $U$ and $C$ based on distance comparisons.
Distance comparisons can be vectorized, but updates to $U$ and $C$ cannot due to branching.
We avoid branches in both algorithms by computing auxiliary mask variables and performing FMAs with these explicit masks.
For \cref{alg:pairwise}, we compute the masks: $r = d_{xz} < d_{xy}~\Or d_{yz} < d_{xy}$ and $s = d_{xz} < d_{yz}$. 
The variable $r$ indicates that $z$ is in the $(x,y)$ local focus and $s$ determines the entry of $C$ to update.
$C$ can be updated via two FMAs: $c_{xz} = c_{xz} + r \cdot s \cdot (1/u_{xy})$ and $c_{yz} = c_{yz} + (r)(1 - s)(1/u_{xy})$.
Branch avoidance introduces a performance tradeoff by increasing computation (e.g. performing FMAs with explicit zeros) but eliminates branch misprediction overhead.
For \cref{alg:pairwise}, branch avoidance enables a fixed stride length for updates of $C$ and facilitates other compiler optimizations (e.g. auto-vectorization and loop unrolling).
Branch avoidance alone yielded a speedup of $1.7\times$ over naive pairwise.
While branch avoidance allows for vectorization, updates to $c_{xz}$ and $c_{yz}$ require a stride length of $n$.
After blocking, we reduce the stride length to $1$ by updating columns of $C$ instead (see \cref{fig:pairwise_dependency}).
The combination achieved speedups of $20.2\times$ over naive pairwise.

\Cref{alg:triplet} must determine the closest pair of points from a triplet $(x,y,z)$.
We avoid branches in \cref{alg:triplet} by computing three masks from three floating point comparisons: $r = d_{xy} < d_{xz}~\tAnd d_{xy} < d_{yz}$, $s = (1 - r)(d_{xz} < d_{yz})$, and $t = (1 - r)(1 - s)$. 
$C$ can then be updated using six FMAs:
\begin{align*}
    c_{xy} = c_{xy} + r\left(1/u_{xz}\right), \quad &
    c_{yx} = c_{yx} + r\left(1/u_{yz}\right),\\
    c_{xz} = c_{xz} + s\left(1/u_{xy}\right), \quad &
    c_{zx} = c_{zx} + s\left(1/u_{yz}\right),\\
    c_{yz} = c_{yz} + t\left(1/u_{xy}\right), \quad &
    c_{zy} = c_{zy} + t\left(1/u_{xz}\right).
\end{align*}
Applying branch avoidance to the triplet algorithm yields a speedup of $0.98\times$ due to the stride-$n$ updates to $C$.
When combined with blocking, however, we attain speedups of $20\times$ over naive triplet.
Triplet with branch avoidance and blocking yields a speedup of $1.1\times$ over pairwise with the same optimizations. 
We were able to extract additional speedup by replacing floating point operations with integer operations during local focus updates, and ignoring equality in pairwise/triplet distance comparisons.
Each entry of $U$ counts the number of points in the local focus based on distance comparisons, with results stored in a mask register.
If $U$ is stored as a floating point array, then each increment to update $U$ requires an expensive integer mask to 32-bit floating point cast operation.
We avoid this by storing $U$ as an integer array during the local focus computation.
This allowed us to combine casting with computing reciprocals prior to cohesion updates.

The theoretical formulation of PaLD \cite{pald_pnas22} allows for ties in pairwise distances (e.g., $d_{xz} == d_{yz}$).
When ties occur, support is split between cohesion entries $c_{xz}$ and $c_{yz}$ (i.e. $c_{xz} = c_{xz} + r\cdot s \cdot \left(0.5/u_{xy}\right)$).
In finite arithmetic, floating point equality is unlikely due to round-off and truncation.
Avoiding ties is critical for \cref{alg:triplet} which contains more distance tie permutations than pairwise.
Introducing these additional optimizations yields self-relative speedups (over naive) of $25.5\times$ and $26.2\times$ for pairwise and triplet, respectively.
Overall, optimized triplet achieves a speedup of $1.14\times$ over optimized pairwise for $n = 2048$.
\begin{figure}[t]
    \centering
    \includegraphics[width=.8\columnwidth]{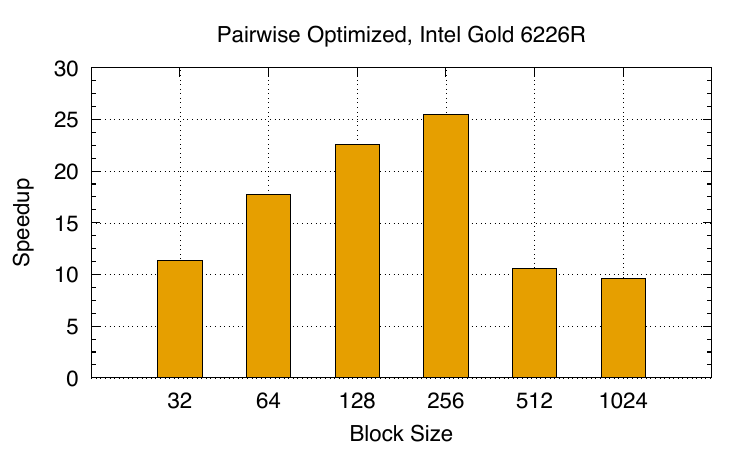}

    \includegraphics[width=.8\columnwidth]{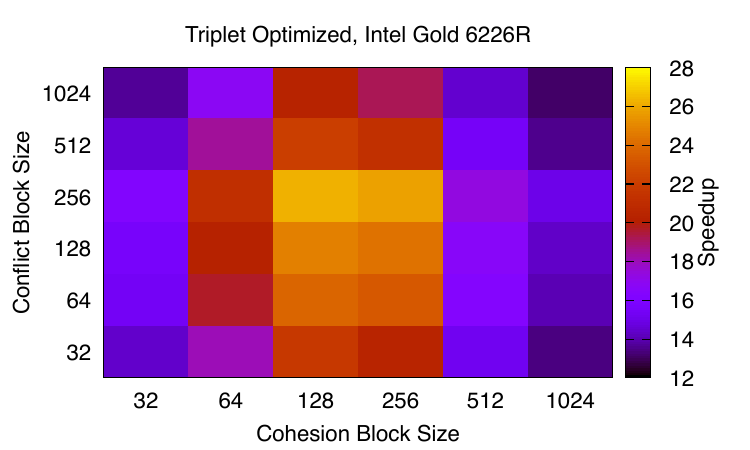}
    \caption{
        Speedup achieved from block size tuning for pairwise (top) and triplet (bottom) for $n = 2048$.
        }
    \label{fig:triplet_heatmap}
\end{figure}
We also perform block size tuning for each algorithm.
We experiment with (powers of two) block sizes in the range $[2^5, 2^{10}]$. Optimized pairwise attains a maximum speedup of $25.5\times$ for $n = 2048$ after tuning.

For optimized triplet, updates to $U$ require storing $3$ distinct blocks of $D$ and $3$ distinct blocks of $U$ in cache.
Updates to $C$ require $3$ distinct blocks of $D$, $3$ distinct blocks of $U$, and $6$ distinct blocks of $C$ in cache.
This suggests that different block sizes may be better than a fixed block size.
\Cref{fig:triplet_heatmap} (bottom) illustrates the speedups observed (over \cref{alg:triplet}) for various block size combinations for the optimized triplet algorithm.
We observe a maximum speedup of $26.2\times$ over naive triplet with $\hat{b} = 256$ and $\tilde{b} = 128$.
\begin{table}[t]
\footnotesize
    \centering
    \begin{tabular}{c|c|c}
        $n$ & Pairwise Optimized & Triplet Optimized\\ \hline\hline
        128 & {\bf 0.00117 (1.58$\times$)} & {0.00185} \\ \hline
        256  & {\bf 0.00497 (1.34$\times$)} & {0.00665} \\ \hline
        512  & {\bf 0.0188 (1.18$\times$)} & {0.0221} \\ \hline
        1024 & 0.1274 & {\bf 0.1208 (1.05$\times$)} \\ \hline
        2048 & 0.9942 & {\bf 0.8734  (1.14$\times$)} \\ \hline
        4096 & 8.3623 & {\bf 6.6111  (1.26$\times$)} \\
    \end{tabular}
    \caption{Running time in seconds (and speedup) comparison of  pairwise and triplet algorithms.}
    \label{tab:seqtimes}
\end{table}
In \cref{tab:seqtimes} we compare running times (and speedups) of optimized pairwise and optimized triplet over a range of input matrix sizes.
For small matrix sizes, where $D,U$ and $C$ all fit in cache, optimized pairwise is fastest (e.g. speedup of $1.58\times$ over triplet at $n = 128$).
This is because $n/b$ is a small integer where lower order terms dominate (see \cref{thm:triplet}).
For larger matrices, optimized triplet performs better (speedup of $1.26\times$ over pairwise at $n = 4096$) due to lower computation cost.
In practice, we expect triplet to be the better sequential variant for most applications of PaLD.
If distances ties must be handled correctly, then pairwise is the better variant due to fewer branches.

Finally, we note that optimized pairwise attains $27.7\%$ of hardware peak at $n = 2048$ and optimized triplet attains $28\%$ at $n = 8192$.
Our Intel CPU has a single-core, single precision peak of $249.6$ Gflops/sec.
Single precision comparisons on our CPU have a cycles-per-instruction (CPI) of $1$ while all other single precision ops have a CPI of $0.5$.
Thus, floating point comparisons are twice as expensive.
See \Cref{sec:pct-peak} for details on percentage of peak calculations for each algorithm.

The combination of all optimizations achieves speedups of $25.5\times$ and $29\times$ for pairwise and triplet, respectively, over naive pairwise (for $n = 2048$).
We observe speedups of $23\times$ and $26.2\times$ over naive triplet.
\section{Shared-Memory Parallel Algorithms}
\label{sec:par-algs}
This section presents the OpenMP parallelization of the optimized sequential pairwise and triplet algorithms.
\Cref{fig:omp_pairwise} shows the OpenMP version of the blocked pairwise algorithm.
The blocked pairwise algorithm first computes $U_{\mathcal{X},\mathcal{Y}}$ with a pass over all $n$ points $z$.
The local focus $z$-loop can be parallelized across $p$ threads using the OpenMP \verb|parallel for| construct.
All threads must write to $U_{\mathcal{X},\mathcal{Y}}$ so a sum-reduction is required to resolve write conflicts.
The cohesion update pass requires the quantities $1/u_{xy}~\forall~(x,y)~\in~\mathcal{X}{\times}\mathcal{Y}$, which can be parallelized without write conflicts.
Cohesion updates are within each column of $C$ to entries of $C_{\mathcal{X},z}$ and $C_{\mathcal{Y},z}$.
The cohesion pass can be parallelized without write conflicts by splitting the $z$-loop across $p$ threads.
\Cref{fig:pairwise_writes} illustrates the write patterns for optimized OpenMP pairwise for $n = 16$, $b = 4$, and $p = 8$.
Updates to entries of $C$ requires corresponding entries from $D$, so $D$ can also be partitioned column-wise.
The pairwise algorithm is amenable to NUMA optimizations due to the regular data dependencies.

\Cref{fig:omp_triplet} shows the OpenMP version of the blocked triplet algorithm. The triplet approach requires reading all of $D$ for local focus and cohesion update passes.
Blocking is performed over triplets of points, $\mathcal{X},\mathcal{Y},\mathcal{Z}$, and updates to $U$ and $C$ become irregular.
We use the OpenMP tasking model \cite{openmp-spec} for parallelism.
Each triplet block, $\mathcal{X}\times\mathcal{Y}\times\mathcal{Z}$, is a new task that can be executed by any available thread.
Tasks in the local focus pass write to $3$ blocks of $U$. $C$ is not symmetric, so the cohesion update pass writes to $6$ blocks.
Write conflicts arise when multiple tasks need to update the same blocks of $U$ or $C$.
We resolve conflicts by annotating dependencies using the \verb|depend| clause with the \verb|inout| modifier.
\Cref{fig:triplet-tasks} shows the write conflicts for the local focus pass.
Each vertex represents one of the $\binom{n/b + 2}{3}$ tasks and is labeled by $\mathcal{X},\mathcal{Y},\mathcal{Z}$ block values, and edges represent conflicts.
The degree for each vertex varies based on the symmetry in the block.
This leads to irregular dependencies which we will show in \cref{sec:omp-perf} are not as amenable to NUMA optimizations.

\begin{figure}[t]
    \begin{lstlisting}
        for(xb = 0; xb < n/b; ++xb)
          for(yb = 0; yb <= xb; ++yb)
            !\color{orange}{\#pragma omp parallel for \textbackslash}!
            !\color{orange}{reduction(+:U[$\mathcal{X},\mathcal{Y}$])}!
            for(z = 0; z < n; ++z)
              for(x = 0; x < b; ++x)
                y_start = (xb==yb) ? (x+1) : 0;
                for(y = y_start; y < b; ++y)
                  // update !\color{green!75!black}{$u_{xy}$.}!
            !\color{orange}{\#pragma omp parallel for}!
            for(i = 0; i < b*b; ++i)
                U[i] = 1/U[i];
            !\color{orange}{\#pragma omp parallel for}!
            for(z = 0; z < n; ++z)
              for(x = 0; x < b; ++x)
                y_start = (xb==jb) ? (x+1) : 0;
                for(y = y_start; y < b; ++y)
                  // update !\color{green!75!black}{$c_{xz}$ and $c_{yz}$.}!
    \end{lstlisting}
    \caption{Blocked OpenMP pairwise Algorithm.}
    \label{fig:omp_pairwise}
\end{figure}

\begin{figure}[t]
\centering
\includegraphics[width=.8\columnwidth]{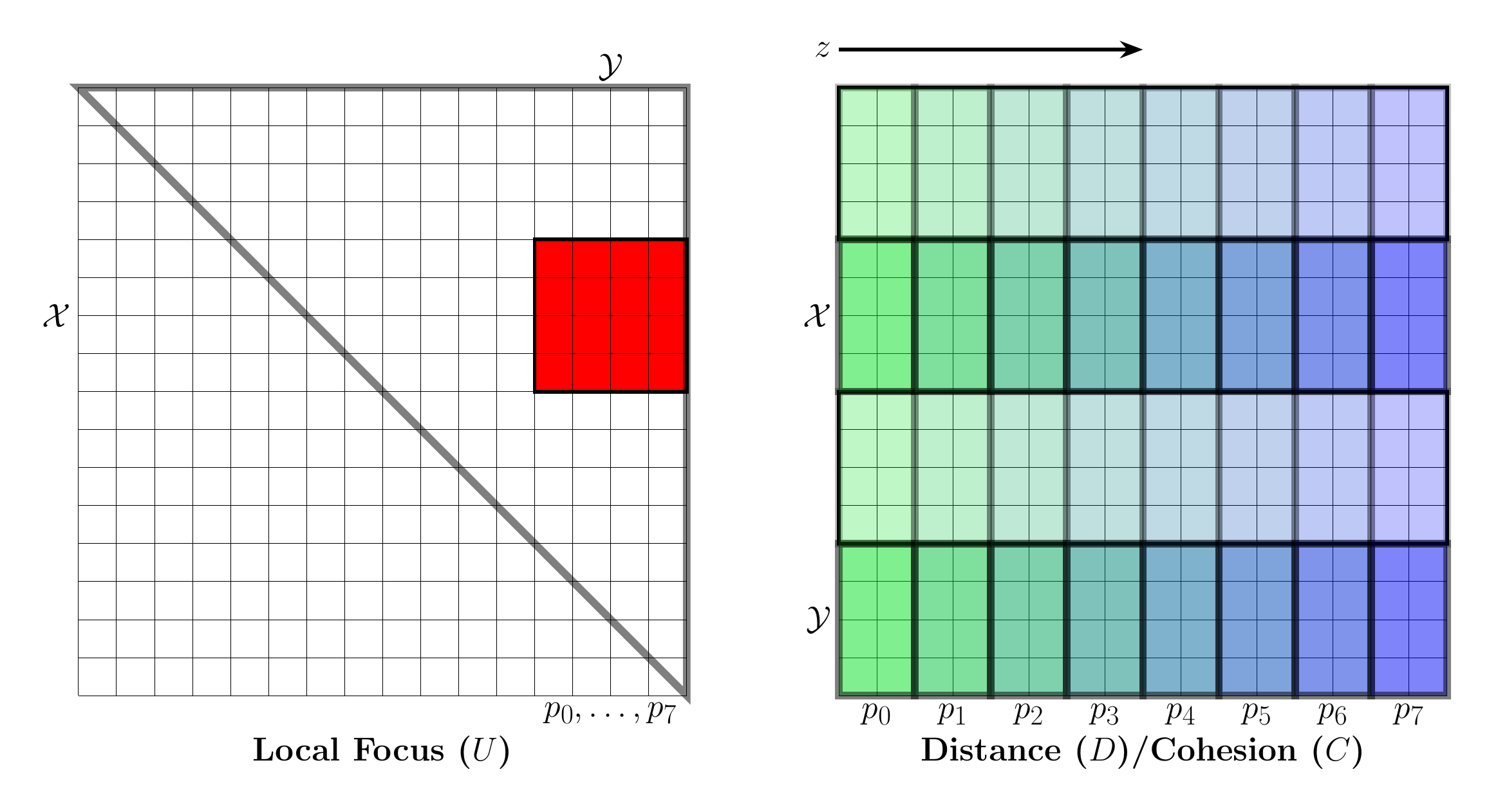}
\caption{Distance matrix reads and Local Focus/Cohesion writes for parallel pairwise code with $n=16$, $b =4$, and $p = 8$.
All threads have write conflicts to the $U$ block for each pair $\mathcal{X},\mathcal{Y}$ (in red), so synchronization is required via reductions.
Only one $U$ block is needed in fast memory at any given point in time. Writes to $C$ are within one column, so column blocks can be partitioned across threads without write conflicts.
}
\label{fig:pairwise_writes}
\end{figure}
\begin{figure}[t]
    \begin{lstlisting}
        !\color{orange}{\#pragma omp single}!
        for(xb = 0; xb < n/b; ++xb)
          for(yb = xb; yb < n/b; ++yb)
            for(zb = yb; xb < n/b; ++zb)
              x_end=(xb==yb && yb==zb)?(b-1):b
              !\color{orange}{\#pragma omp task untied depend(inout, U[$\mathcal{X},\mathcal{Y}$],U[$\mathcal{X},\mathcal{Z}$],U[$\mathcal{X},\mathcal{Z}$])}!
              for(x = 0; x < x_end; ++x)
                y_start=(xb==yb) ? (x+1) : 0;
                for(y = y_start; y < b; ++y)
                  z_start=(yb==zb) ? (y+1) : 0;
                  for(z = z_start; z < zb; ++z)
                    // update !\color{green!75!black}{$u_{xy},~u_{xz},~u_{yz}$.}!
        !\color{orange}{\#pragma omp parallel for}!
        for(i = 0; i < n*n; ++i){
            U[i] = 1/U[i];
        }
        !\color{orange}{\#pragma omp single}!
        for(xb = 0; xb < n/b; ++xb)
          for(yb = xb; yb < n/b; ++yb)
            for(zb = zb; xb < n/b; ++zb)
              x_end=(xb==yb && yb==zb)?(b-1):b
              !\color{orange}{\#pragma omp task untied depend(inout, C[$\mathcal{X},\mathcal{Y}$],C[$\mathcal{X},\mathcal{Z}$],C[$\mathcal{Y},\mathcal{Z}$]) \textbackslash}!
              !\color{orange}{depend(inout, C[$\mathcal{Y},\mathcal{X}$],C[$\mathcal{Z},\mathcal{X}$],C[$\mathcal{Z},\mathcal{Y}$])}!
              for(x = 0; x < xend; ++x)
                y_start=(xb==yb) ? (i+1) : 0;
                for(y = ystart; y < b; ++y)
                  z_start=(yb==zb) ? (y+1) : 0;
                  for(z = z_start; z < zb; ++z)
                    // update !\color{green!75!black}{$c_{xy},~c_{xz},~c_{yz}$,}!
                    // update !\color{green!75!black}{$c_{yx},~c_{zx},~c_{zy}$.}!
    \end{lstlisting}
    \caption{Blocked OpenMP triplet Algorithm.}
    \label{fig:omp_triplet}
\end{figure}
\begin{figure}[t]
\centering
\includegraphics[width=.5\columnwidth]{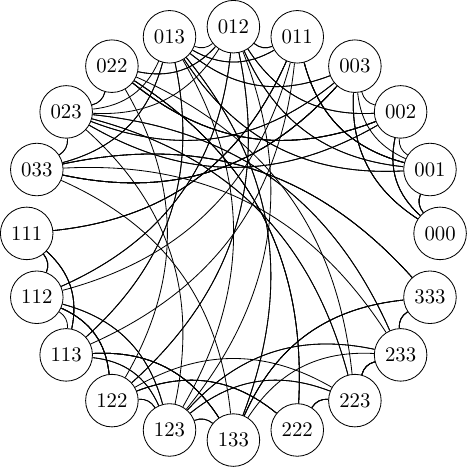}
\caption{Task diagram for parallel triplet with $n/b=4$, where nodes are labeled by their $\mathcal{X},\mathcal{Y},\mathcal{Z}$ block values. Edges represent write conflicts for $U$ between tasks.}
\label{fig:triplet-tasks}
\end{figure}

\subsection{OpenMP Performance.}
\label{sec:omp-perf}
We use OpenMP version 4.5 and test the OpenMP algorithms on randomly generated dense distance matrices with $n \in \{2048, 4096, 8192\}$. We incorporate NUMA optimizations into the pairwise algorithm by controlling thread affinity via the \verb|OMP_PROC_BIND| and \verb|OMP_PLACES| environment variables. We map OpenMP threads to physical cores, by assigning OpenMP thread ids $0$ to $16$ to CPU $0$ and threads $17$ to $31$ to CPU $1$.
\begin{figure}[t]
    \centering
    \includegraphics[width=.75\columnwidth]{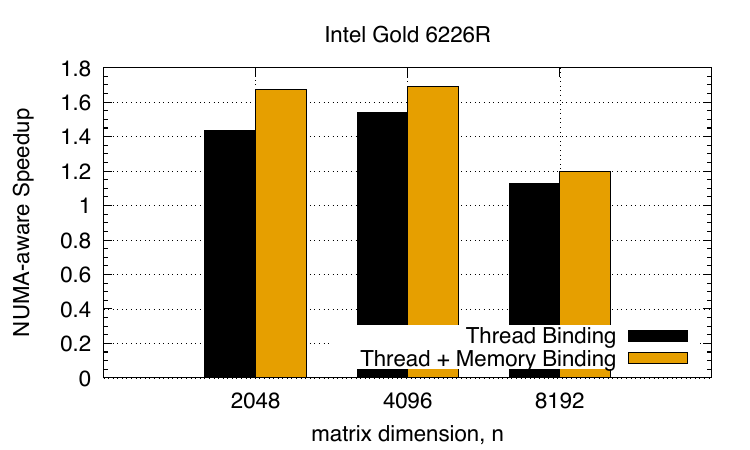}
    \caption{OpenMP pairwise speedup from NUMA optimizations with $n \in \{2048, 4096, 8192\}$ and $p = 32$.}
    \label{fig:numa}
\end{figure}
A static loop schedule yields best performance due to the pairwise algorithm's regular dependencies.
Each thread reads columns of $D$ and $C$ from thread-local fast memory so updates to $C$ are spatially local.
Thread binding ensures that accesses are temporally local by assigning fixed column blocks of $D$/$C$ to threads.
OpenMP allocates memory pages using a first-touch policy by default.
If a single thread allocates $D$, then $D$ resides in the memory hierarchy of the thread's CPU.
$D$ is typically computed outside the scope of the OpenMP algorithms, so we also study the effects of partitioning $D$ across sockets (i.e. memory binding).

\Cref{fig:numa} shows the speedup achieved by introducing thread binding only and thread + memory binding into the OpenMP pairwise algorithm across three matrix sizes, $n \in \{2048, 4096, 8192\}$.
We use the OpenMP pairwise algorithm without NUMA-aware optimizations as our baseline and report speedups for $32$ OpenMP threads.
When we use thread binding only, we observe average speedups of $1.4\times, 1.5\times,$ and $1.13\times$ for $n = 2048, 4098,$ and $8192$, respectively.
Thread binding with memory binding yields average speedups of speedup of $1.7\times, 1.69\times,$ and $1.2\times$ over the baseline.
We did not perform TLB optimizations, therefore, we observe decreasing speedups for large matrix sizes.
We also found that NUMA optimizations are useful at smaller thread counts, $2 \leq p \leq 16$, by mapping half the threads to CPU $0$ and the other half to CPU $1$.
This mapping provides access to the fast memory hierarchies on both CPUs.
We observe speedups ranging from $1.05\times$ ($n = 4096, p = 2$) to $1.33\times$ ($n = 2048, p = 16$) when splitting threads (where $p \leq 16$) across sockets. 
We experimented with thread binding for the OpenMP triplet algorithm but not memory binding due to the irregular data dependencies.
However, we did not observe significant performance improvements over the baseline, so we omit these results from \cref{fig:numa}.
We obtain best OpenMP scaling when using the \verb|untied| clause, which allows suspended tasks to be resumed on any available thread.
Suspended tasks may cause additional reads from slow memory after restart.
Hence, we do not expect NUMA optimizations to be helpful. 
\begin{figure}[t]
    \centering
    \includegraphics[width=.8\columnwidth]{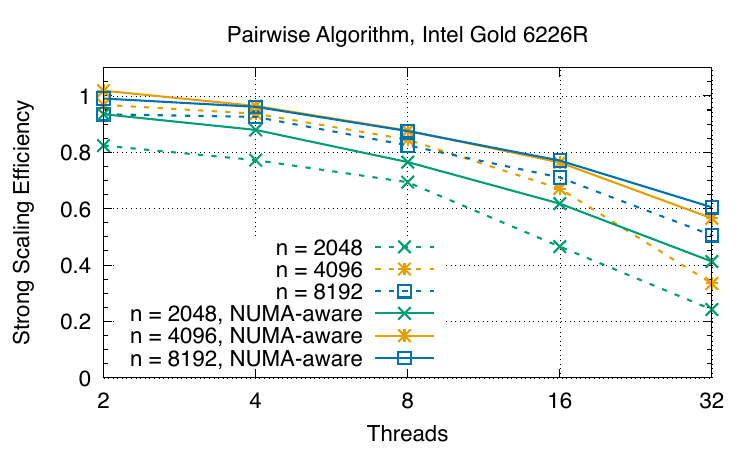}

    \includegraphics[width=.8\columnwidth]{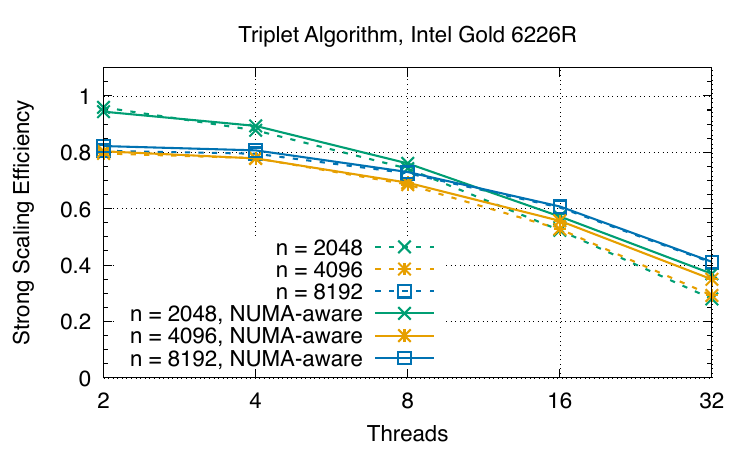}
    \caption{Self-relative strong scaling efficiency of OpenMP Pairwise (top) and Triplet (bottom).}
    \label{fig:strong}
\end{figure} %
We perform strong scaling experiments in \cref{fig:strong} of the OpenMP variants under the same settings as for \cref{fig:numa} and report self-relative efficiency achieved.
We report efficiencies with and without NUMA optimizations. 
The pairwise algorithm without NUMA optimizations achieves efficiencies of $24.2\%, 33.5\%,$ and $50.6\%$ at $p = 32$ for $n = 2048, 4096$ and $8192$, respectively.
Including NUMA optimizations yields efficiencies of $42.9\%, 56.6\%,$ and $60.5\%$ for $p = 32$.
The triplet algorithm achieves efficiencies of $28.0\%, 29.2\%,$ and  $40.9\%$ without NUMA optimizations and $36.9\%, 34.9\%$, and $41.2\%$ with NUMA optimizations for $p = 32$.
The triplet algorithm is the faster sequential baseline, hence the OpenMP triplet efficiencies are lower than those reported for OpenMP pairwise. 
We also study weak scaling of the two algorithms with and without NUMA optimizations.
We fix $n^3/p$ over the range of $p$ tested.
We use the matrix sizes $n_1 \in \{2048, 4096, 8192\}$, where $n_1$ is the matrix size at $p = 1$.
\Cref{fig:weak} shows the results of the weak scaling experiments.
The pairwise algorithm without NUMA optimizations attains weak scaling efficiencies of $30.6\%, 48.2\%$, and $61.4\%$ for $n_1 = 2048, 4096,$ and $8192$, respectively at $32$ threads.
With NUMA optimizations, the efficiencies increase to $59.1\%,~ 63.6\%$, and $65.6\%$ for each of the matrix size settings at $p = 32$.
Triplet without NUMA optimizations achieves weak scaling efficiencies of $44.2\%, 49.1\%,$ and $50.1\%$ and $47.6\%, 49.1\%,$ and $50.1\%$ with NUMA optimizations at $p = 32$. 
\begin{figure}[t]
    \centering
    \includegraphics[width=.8\columnwidth]{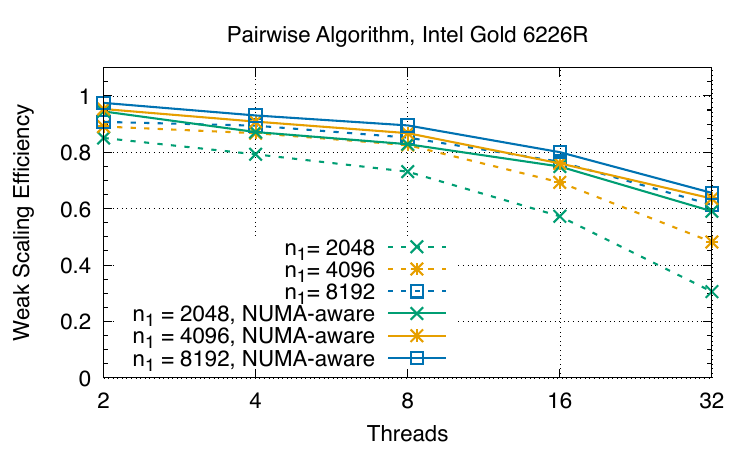}

    \includegraphics[width=.8\columnwidth]{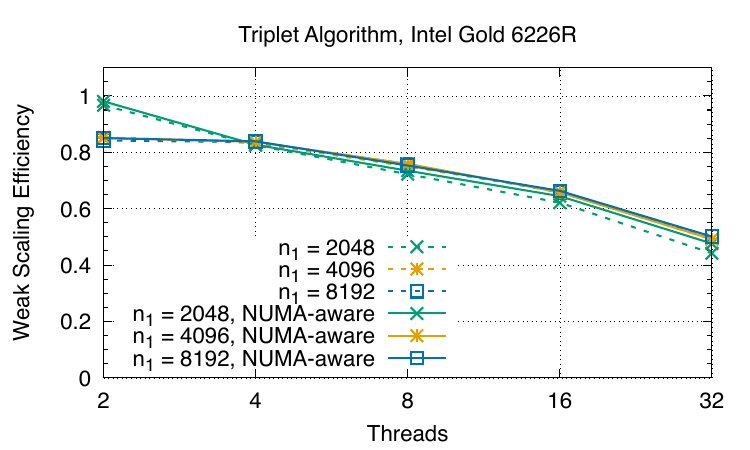}
    \caption{Self-relative weak scaling efficiency of OpenMP Pairwise (top) and Triplet (bottom).}
    \label{fig:weak}
\end{figure}
\section{Text Analysis Application}
\label{sec:application}

\begin{figure}
    \centering
    \includegraphics[width=.3\columnwidth,trim=4cm 9cm 4cm 9cm,clip]{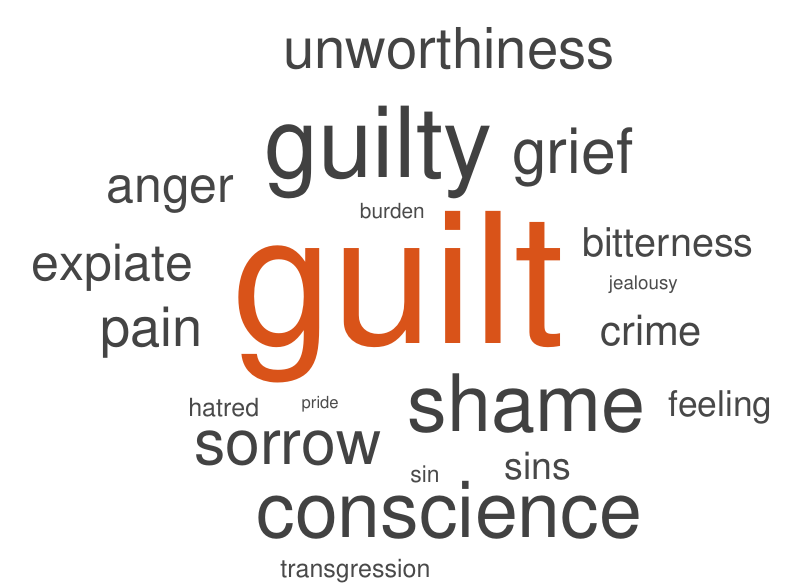}%
    \includegraphics[width=.3\columnwidth,trim=4cm 9cm 4cm 9cm,clip]{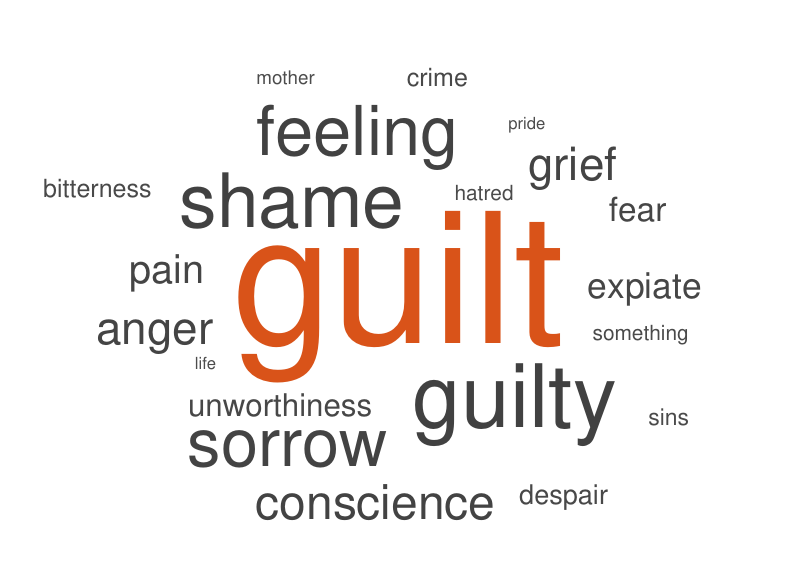}

    \includegraphics[width=.3\columnwidth,trim=5cm 10cm 6cm 10cm,clip]{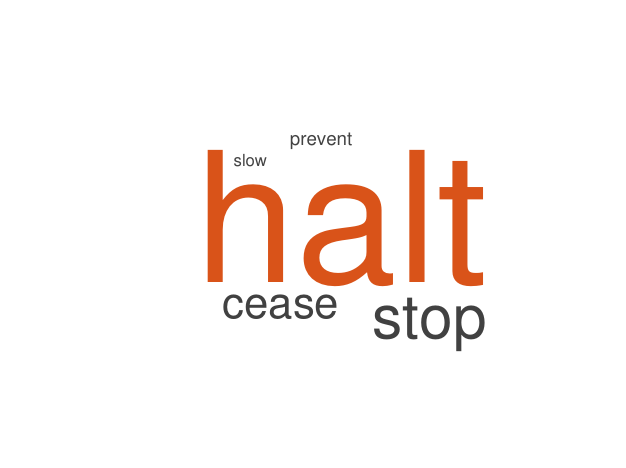}
    \includegraphics[width=.3\columnwidth,trim=5cm 10cm 6cm 10cm,clip]{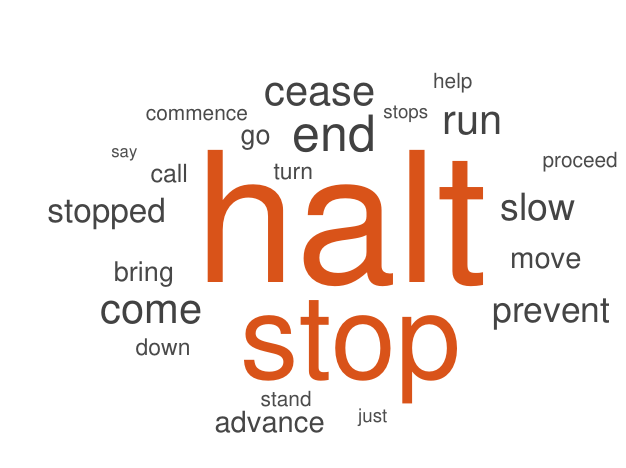}%
    \caption{Word clouds from PaLD analysis (left column) and distance analysis (right column) of the words \emph{guilt} and \emph{halt}. Font size is proportional to cohesion values and inverse distances.}
    \label{fig:wordcloud}
\end{figure}

We demonstrate the utility of PaLD on larger datasets than previously considered \cite{pald_pnas22} for semantic analysis of words extracted from Shakespeare sonnets \cite{matlab-text-toolbox}.
Words are converted to vectors using the pre-trained fastText word embedding \cite{BGJM16-arxiv,JGBM16-arxiv}, yielding a dataset of 2712 words.
We compute Euclidean distance between embedding vectors and generate the cohesion matrix $C$ using the OpenMP pairwise algorithm.
\Cref{fig:wordcloud} shows words associated with \emph{guilt} and \emph{halt} obtained from PaLD and from analyzing only the distance matrix $D$.
PaLD is parameter-free, with strong ties determined by a universal threshold (see \cite{pald_pnas22}), whereas analysis using $D$ requires a user-tuned distance or neighbor-count cutoff.
Note the differing sizes of strong-tie neighborhoods between the two words.
PaLD finds $20$ words with strong ties to \emph{guilt} and 5 words for \emph{halt}.
The $20$ closest words to \emph{guilt} based on distance correspond to a cutoff of $2.26$.
We observe significant overlap between the two sets, though PaLD reports stronger ties to  \emph{expiate} and \emph{conscience}.
PaLD finds $5$ words with strong ties to \emph{halt}.
To illustrate the pitfalls of tuning an absolute distance threshold, we apply the distance cutoff $2.26$ for \emph{halt}, which yields $23$ words including several unrelated ones (e.g. \emph{just} and \emph{say}).
This suggests that absolute distance thresholds are not robust to varying density and distance scales within word neighborhoods.
A distance cutoff of $2.14$ is required for \emph{halt} to match results obtained from PaLD.
Applying the cutoff to \emph{guilt} identifies only $8$ related words, missing several words like \emph{expiate}.
We attain a speedup of $16.7\times$ using the NUMA optimized OpenMP pairwise algorithm at $p = 32$ and an overall run time of $0.178$ seconds.

\section{Conclusion}
This paper presents several sequential and shared-memory parallel algorithms for PaLD \cite{pald_pnas22}.
We prove that sequential variants are communication-optimal, up to constant factors.
We illustrate that branch avoidance is critical to attaining high performance; achieving a speedup of up to $29\times$ over naive sequential variants.
Based on our theoretical and empirical studies, we conclude that the triplet variant is the faster sequential algorithm for large matrices due to less computation.
However, we show that the pairwise algorithm is more amenable to parallelization due to regular data dependencies and load balance.
We observe strong scaling speedups  up to $19.4\times$ ($60.5\%$ efficiency), and weak scaling efficiencies of up to $65.6\%$ at $p = 32$ after incorporating NUMA-aware optimizations.
 With the performance achieved on the text analysis application, we show that PaLD can be scaled to nearly any dataset with a distance matrix that fits in the memory of a single server.

\paragraph{Acknowledgements.} We would like to thank Kenneth S. Berenhaut for helpful feedback on the presentation of PaLD and discussions on applying PaLD to semantic analysis of word embedding in \Cref{sec:application}. We would also like to thank Yixin Zhang for code contributions to preliminary versions of the pairwise algorithms. This work is supported by the National Science Foundation under Grant No. OAC-2106920 and the U.S. Department of Energy, Office of Science, Advanced Scientific Computing Research program under Award Number DE-SC-0023296.

\bibliographystyle{siam}
\bibliography{refs}

\clearpage
\appendix
\section{Percentage of Hardware Peak}\label{sec:pct-peak}
\subsection{Pairwise Algorithm.}
This section details the operations counting for the optimized sequential pairwise and triplet algorithms used for percentage of peak calculations in \Cref{sec:seqexp}.
Unless otherwise specified, all operands are assumed to be in $32$-bit floating point format.
The optimized sequential pairwise algorithm requires $2$ comparisons during local focus update to determine if a point, $z$, is in the neighborhood.
The local focus matrix is incremented based on these comparisons.
However, since $U$ is stored in integer format, we ignore the cost of integer increments during the local focus pass.
The cohesion update requires $3$ comparisons: $2$ comparisons to compute mask $r$ which determines if a point $z$ is in the local focus and $1$ comparison to compute mask $s$ which determines the column entry of $C$ to update.
Since results of floating point comparisons are stored in unsigned (integer) format, $r$ and $s$ must be cast to $32$-bit floats before FMAs.
This requires $2$ unsigned int to floating point cast operations.
Finally, $2$ FMAs (each FMA requires two instructions) can be used to update $c_{xz}$ and $c_{yz}$.
We explicitly compute both entries and accumulate with an explicit zero, as this avoids branching.
Based on these operations, the total number of operations for sequential pairwise can be computed as follows:
\begin{align*}
    F = \left(5 \gamma_{cmp} + 2\cdot 2\gamma_{fma} + 2\gamma_{cast}\right)\cdot n \binom{n}{2}
\end{align*}
On our Intel Xeon Gold 6226R CPU, floating point comparisons have a CPI of $1$ whereas FMAs and casting each have a CPI of $0.5$.
Since comparisons are twice as expensive, we normalize our operation count to be relative to FMA/cast.
After normalization, the total number of operations become:
\begin{align*}
    F = 16\gamma\cdot n \binom{n}{2} \approx 8 n^3~\text{ops}.
\end{align*}
Finally, percentage of peak can be calculated by:
\begin{align}\label{eq:pct-peak}
    \frac{1}{249.6}\cdot \frac{F}{10^9 \cdot t_n}
\end{align}
where $t_n$ is the runtime time (in seconds) obtained empirically from executing the optimized sequential pairwise algorithm on a matrix of size $n$ and $249.6$ Gflops/sec is the single precision, single core machine peak of our Intel CPU.
The setting $n = 2048$ and $t_n = 0.99422~\text{seconds}$ (averaged over $5$ trials) yields $27.7\%$ of peak, as reported in \Cref{sec:seqexp}.
\subsection{Triplet Algorithm.}
The optimized sequential triplet algorithm makes two passes: one to compute $U$ in its entirety and one to compute $C$.
The triplet algorithm, which ignores ties, requires $6$ comparisons across the two passes to uniquely determine the pair of points in a triplet with minimum pairwise distance.
The local focus pass and cohesion pass must compute these distances.
Once again, we ignore integer increments in the local focus pass.
The remaining instructions are $3$ casting and $6$ FMA operations to update entries of $C$.
\begin{align*}
        F = \left(12 \gamma_{cmp} + 2\cdot 6\gamma_{fma} + 3\gamma_{cast}\right)\cdot& \binom{n}{3} \approx 6.5 n^3.
\end{align*}
Setting $F = 6.5n^3, n = 8192$ and $t_n = 51.15952$ seconds in \eqref{eq:pct-peak} yields $28\%$ of peak, as reported in \Cref{sec:seqexp}.

\section{Runtime Breakdown}
 \begin{figure}[t]
    \centering
    \includegraphics[width=.8\columnwidth]{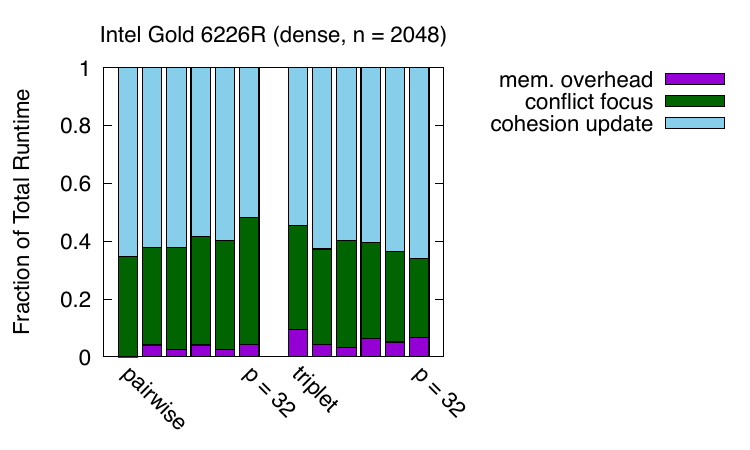}
    \caption{Running time breakdown (as a fraction of total running time) of pairwise and triplet algorithms for a random dense, $n = 2048$ matrix.}
    \label{fig:rt}
\end{figure}

\Cref{fig:rt} shows the running time breakdown of the pairwise and triplet algorithms, grouped by the algorithm.
We report the fraction of time taken to compute the local focus, cohesion update, and memory overhead (e.g. memcpy into explicit cache blocks).
The OpenMP pairwise algorithm requires a reduction during the local focus computation.
As $p$ increases, we see that the local focus computation becomes a barrier to scalability.
The OpenMP triplet algorithm however does not require explicit synchronization.
This results in better scalability for the local focus computation as $p$ increases.
The cohesion update, however, does not scale efficiently since it updates up to $6$ unique blocks of $C$.
Given the irregular task dependencies (see \cref{fig:triplet-tasks}), the OpenMP triplet algorithm was not amenable to NUMA optimizations.
In addition, variance in task costs due to symmetries and a non-static task schedule were barriers to efficiently scaling the cohesion update pass.
Memory overhead is a negligible fraction of runtime for pairwise and triplet algorithms and does not increase with $p$.
At $32$ threads, the pairwise algorithm is faster than triplet algorithm for cohesion update.
The reverse is true for the local focus update.
This behavior may indicate that the two algorithms can be combined by utilizing the triplet approach for local focus update and the pairwise approach for cohesion update for additional speedup.

\section{Scaling on SNAP Datasets.}
\begin{table}[t]
    \centering
    \begin{tabular}{c|c|c|c}
        Dataset & $n$ & sequential & $p = 32$\\\hline\hline
        ca-GrQc & 5242 & 21.69 &\bf 1.390 (15.6$\times$)\\\hline
        ca-HepPh & 12008 & 259.9 & \bf 13.16 (19.7$\times$)\\\hline
        ca-CondMat & 23133 & 1913 &\bf 91.89 (20.8$\times$)\\\hline
    \end{tabular}
    \caption{Pairwise runtimes (in sec.) and maximum speedup over pairwise sequential on SNAP datasets.}
    \label{tab:snapscaling}
\end{table}
We perform scaling experiments on large datasets obtained from the SNAP data repository \cite{snapnets} to illustrate PaLD scalability on collaboration networks.
We obtain distance matrices by computing all-pairs shortest path distances.
\Cref{tab:snapscaling} reports the running times (in seconds) and speedup achieved at $p = 32$ for the pairwise algorithm.
We use the optimized sequential pairwise algorithm as our baseline.
We achieve speedups of $15.6\times$, $19.7\times$ and $20.8\times$ on the ca-GrQC, ca-HepPh, and ca-CondMat datasets \cite{lkf07-tkdd}, respectively.
For the largest dataset, ca-CondMat, we are able to reduce the running time of computing $C$ from $31$ minutes (optimized pairwise sequential) to $92$ seconds (OpenMP pairwise with $p = 32$).

\end{document}